\def\BState{\State\hskip-\ALG@thistlm}
\newtheorem{Proposition}{Proposition}
\newtheorem{Example}{Example}
\begin{document}

\title{Efficient Search of Compact QC-LDPC and SC-LDPC Convolutional Codes with Large Girth}
%\author[1]{Mohammad H. Tadayon}
%\author[2]{Marco Baldi}
%\author[3]{Alireza Tasdighi}
%\author[2]{Massimo Battaglioni}
%\affil[1]{Iran Telecommunication Research Center (ITRC), Tehran, Iran\\
%Email:tadayon@itrc.ac.ir}
%\affil[2]{Dipartimento di Ingegneria dell'Informazione\\
%Universit\`a Politecnica delle Marche\\
%Ancona, Italy\\
%Email:m.baldi@univpm.it, m.battaglioni@pm.univpm.it}
%\affil[3]{Department of Mathematics and Computer Science, Amirkabir University of Technology, Tehran, Iran\\
%Email:a.tasdighi@aut.ac.ir}

\author{Mohammad H. Tadayon, Alireza Tasdighi, Massimo Battaglioni, Marco Baldi and Franco Chiaraluce
\thanks{Mohammad H. Tadayon is with Iran Telecommunication Research Center (ITRC), Tehran, Iran
(e-mail: tadayon@itrc.ac.ir).
Alireza Tasdighi is with the Department of Mathematics and Computer Science, Amirkabir University of Technology, Tehran, Iran
(e-mail: a.tasdighi@aut.ac.ir).
Massimo Battaglioni, Marco Baldi and Franco Chiaraluce are with the Dipartimento di Ingegneria dell'Informazione, Universit\`a Politecnica delle Marche, Ancona, Italy (e-mail: m.battaglioni@pm.univpm.it, m.baldi@univpm.it, f.chiaraluce@univpm.it)}
}

%\author{Author 1, Author 2, Author 3, Author 4 and Author 5
%\thanks{Author 1 is with Affiliation 1.}
%\thanks{Author 2 is with Affiliation 2.}
%\thanks{Author 3, Author 4 and Author 5 are with Affiliation 3.}}

%\IEEEtitleabstractindextext{
\renewcommand\Authands{ and }
\maketitle

\begin{abstract}
We propose a low-complexity method to find quasi-cyclic low-density parity-check block codes with girth $10$ or $12$ and shorter length than those designed through classical approaches. The method is extended to time-invariant spatially coupled low-density parity-check convolutional codes, permitting to achieve small syndrome former constraint lengths. Several numerical examples are given to show its effectiveness.
\end{abstract}

\begin{IEEEkeywords}
Code design, girth, QC-LDPC block codes, spatially coupled LDPC convolutional codes, time-invariant codes.
\end{IEEEkeywords}

\section{Introduction}
The error rate performance of quasi-cyclic low-density parity-check (QC-LDPC) block codes decoded through iterative algorithms is adversely affected by the presence of cycles with short length in their associated Tanner graphs \cite{Tanner1981}. 
Therefore, the minimum length of cycles, also known as girth of the graph (and denoted by $g$ afterwards), should be kept as large as possible \cite{Wang2008}.
At the same time, short block codes are required in modern applications, like machine-to-machine communications, where low latency must be achieved.

Motivated by these considerations, we propose a method for designing QC-LDPC block codes with large girth ($g = 10, 12$) and short length.
QC-LDPC block codes are the basis for the design of spatially coupled low-density parity-check convolutional codes (SC-LDPC-CCs). These codes can exploit shift register-based circuits for encoding \cite{Felt1}, whereas sliding window (SW) iterative algorithms based on belief propagation (BP) can be used for their decoding \cite{Felt1,Lentmaier2005}.
These decoders perform BP over a window including $W$ blocks of $a$ bits each and, at any decoding stage, give in output the first $a$ decoded bits; the window is then shifted forward by $a$ bits.
The smallest number of blocks required for achieving good performance is $W=\alpha(m_h+1)$, where $m_h$ is the syndrome former memory order of the code and $\alpha \in \mathbb{N}$ usually takes values in $[5,\ldots,10]$.
The decoding latency ($\Lambda_{\mathrm{SW}}$) and per output bit complexity ($\Gamma_{\mathrm{SW}}$) of a SW decoder can be expressed as
\begin{equation}
\begin{cases}
\Lambda_{\mathrm{SW}} = Wa = \alpha(m_h+1)a,\\
\Gamma_{\mathrm{SW}} = \frac{WaI_{\mathrm{avg}} \cdot f(w_{\mathrm{avg}},R)}{a}=\alpha(m_h+1)I_{\mathrm{avg}} \cdot f(w_{\mathrm{avg}},R),
\end{cases}
\label{cas}
\end{equation}
where $I_{\mathrm{avg}}$ is the average number of decoding iterations and $f(w_{\mathrm{avg}},R)$ is a function of the average column weight $w_{\mathrm{avg}}$ of the parity-check matrix and the code rate $R$ (see \cite{Hu2001} for further details). As mentioned, for fixed values of $w_{\mathrm{avg}}$ and $R$, a SW decoder returns $a$ bits per window, independently of the window size $W$. Hence, if $m_h$ is kept small, SC-LDPC-CCs can be decoded with a small window size, thus yielding, according to \eqref{cas}, low decoding latency and complexity. 

We aim at designing QC-LDPC block codes with smaller blocklength and SC-LDPC-CCs with smaller $m_h$ than those with comparable girth available in the literature, and we introduce the notion of \textit{compact codes} to encompass block and convolutional LDPC codes with these features in one word.
In order to design compact codes, we resort to a construction exploiting sequentially multiplied columns (SMCs). The latter are obtained starting from a base column and consecutively multiplying it by suitably chosen coefficients. These columns are used to form the parity-check matrix of a QC-LDPC code (SC-LDPC-CC) and then a greedy search algorithm is used to find compact codes. The SMC assumption significantly reduces the search space and permits us to exhaustively explore it. Moreover, we use a recently introduced integer programming optimization model, called min-max \cite{MBAT2017}, in order to find the minimum possible $m_h$. This model, instead of performing an exhaustive search, takes benefit of a heuristic optimization approach, thus further reducing the search time.

\section{Notation}
\label{Sec2}

\subsection{CPM-based QC-LDPC block codes}

Let us consider a QC-LDPC block code, in which the parity-check matrix is an $m \times n$ array of $N \times N$ circulant permutation matrices (CPMs), $\mathbf{I}(p_{ij})$, $0 \leq i \leq m - 1$, $0 \leq j \leq n - 1$, where $N$ is the \textit{lifting degree} of the code. $\mathbf{I}(p_{ij})$ is obtained from the identity matrix through a cyclic shift of its rows by $p_{ij}$ positions, with $0 \leq p_{ij} \leq N - 1$. The code length is $L=nN$. The $m \times n$ matrix $\mathbf{P}$ having the integer values $p_{ij}$ as its entries is referred to as the {\em exponent matrix} of the code.
For such a QC-LDPC block code, a necessary and sufficient condition for the existence of a cycle of length $2k$ 
in its Tanner graph is \cite{MFossorier1}
\begin{equation}
\sum_{i=0}^{k-1} \left( p_{m_{i}n_{i}} - p_{m_{i}n_{i+1}} \right) = 0  \mod N , 
\label{fore}
\end{equation}   
where $n_{k}=n_{0}$, $m_{i} \neq m_{i+1}$, $n_{i} \neq n_{i+1}$.

To achieve a certain girth $g$, for given values of $m$ and $n$, and for a fixed value of $N$, one has to find a matrix $\mathbf{P}$ whose entries do not satisfy (\ref{fore}) for any value of $k < g/2$, and any possible choice of the row and column indexes $m_i$ and $n_i$. Starting from $\mathbf{P}$, the Tanner graph of the code can be easily obtained, as it is unambiguously related to the values of $p_{ij}$.

We define an {\em avoidable cycle} in the Tanner graph of a CPM-based QC-LDPC block code as a cycle for which $\sum_{i=0}^{k-1} \left( p_{m_{i}n_{i}} - p_{m_{i}n_{i+1}} \right) = \beta N$, $\beta > 0$. A {\em strictly avoidable cycle} is defined as a cycle for which $\sum_{i=0}^{k-1} \left( p_{m_{i}n_{i}} - p_{m_{i}n_{i+1}} \right)=0$. %In \cite{MFossorier1} it is shown that fully-connected CPM-based QC-LDPC codes always contain strictly avoidable cycles of length $12$. %and, thus, their girth cannot be larger than $12$.

\subsection{SC-LDPC convolutional codes}
Time-invariant SC-LDPC-CCs are characterized by semi-infinite parity-check matrices in the form
\begin{equation}
\mathbf{H} = \left[\begin{array}{cccccc}
\arraycolsep=1.4pt\def\arraystretch{5pt}
\mathbf{H}_0 				& \mathbf{0} 							& \mathbf{0} 							& \ddots \\
\mathbf{H}_1				& \mathbf{H}_0					& \mathbf{0} 							& \ddots \\
\mathbf{H}_2				& \mathbf{H}_1					& \mathbf{H}_0					& \ddots \\
\vdots 			& \mathbf{H}_2 				& \mathbf{H}_1 				& \ddots \\
\mathbf{H}_{m_h}	& \vdots 				& \mathbf{H}_2					& \ddots \\
\mathbf{0} 						& \mathbf{H}_{m_h}	& \vdots					& \ddots \\
\mathbf{0} 						& \mathbf{0} 							& \mathbf{H}_{m_h}	& \ddots \\
\vdots				& \vdots					& \vdots					& \ddots \\
\end{array}\right],
\label{eq:Hconv}
\end{equation}
where each block $\mathbf{H}_i$, $i = 0, 1, 2, \ldots, m_h$, is a binary matrix with size $c \times a$.
The syndrome former matrix $\mathbf{H_s} = \left[ \mathbf{H}_0^T | \mathbf{H}_1^T | \mathbf{H}_2^T | \ldots | \mathbf{H}_{m_h}^T \right]$, where $^T$ denotes transposition, has $a$ rows and $(m_h+1)c$ columns. The code defined by the parity-check matrix \eqref{eq:Hconv} has asymptotic code rate $R = \frac{a-c}{a}$;
%, syndrome former memory order $m_h$ and syndrome former constraint length $v_s = (m_h + 1) a$.
$m_h$ determines the height of the non-zero diagonal band in \eqref{eq:Hconv}, whereas the code syndrome former constraint length $v_s = (m_h + 1) a$ gives its length.

\subsection{Link between QC-LDPC block codes and SC-LDPC-CCs}
A common representation of the syndrome former matrix $\mathbf{H_s}$ of an SC-LDPC-CC is based on polynomials $\in F_2[x]$, the ring of polynomials with coefficients in the Galois field $F_2$. In this case, the code is described by a $c \times a$ {\em symbolic matrix} with polynomial entries, that is
\begin{equation}
\mathbf{H}(x)=\left[\begin{array}{llll}
h_{0,0}(x) & h_{0,1}(x) & \ldots & h_{0,a-1}(x)\\
h_{1,0}(x) & h_{1,1}(x) & \ldots & h_{1,a-1}(x)\\
\vdots & \vdots & \ddots & \vdots\\
h_{c-1,0}(x) & h_{c-1,1}(x) & \ldots & h_{c-1,a-1}(x)\end{array}\right],
\label{eq:Hx}
\end{equation}
where each $h_{i,j}(x)$, $i = 0, 1, 2, \ldots, c-1$, $j = 0, 1, 2, \ldots, a-1$, is a polynomial $\in F_2[x]$.
The code representation based on $\mathbf{H_s}$ can be converted into that based on $\mathbf{H}(x)$ as follows
\begin{equation}
h_{i,j}(x)=\sum_{m=0}^{m_h} h_{m}^{(i,j)} x^{m},
\label{eq:bintopol}
\end{equation}
where $h_{m}^{(i,j)}$ is the $(i, j)$-th entry of the matrix $\mathbf{H}_m$, the latter being the $m$-th block of $\mathbf{H_s}$.

We focus on codes described by a symbolic parity-check matrix containing only monomials, also known as \textit{monomial codes}. In this case, $\mathbf{H}(x)$ can be described through an exponent matrix in the form 
\begin{equation}
\mathbf{P}=\left[\begin{array}{llll}
p_{0,0} & p_{0,1} & \ldots & p_{0,a-1}\\
p_{1,0} & p_{1,1} & \ldots & p_{1,a-1}\\
\vdots & \vdots & \ddots & \vdots\\
p_{c-1,0} & p_{c-1,1} & \ldots & p_{c-1,a-1}\end{array}\right],
\label{eq:expomatrix}
\end{equation}
where $p_{i,j}$ is the exponent of the (only) non-null term in $h_{i,j}(x)$. This extends the definition of exponent matrix introduced above for a QC-LDPC block code, thus evidencing the link between the two representations.
The syndrome former memory order $m_h$ is the largest difference, in absolute value, between any two elements of $\mathbf{P}$.

\section{Code design via sequentially multiplied columns}
\label{Sec3}

It is shown in \cite{ATasdighi1} that the complexity of exhaustively checking equations of the type (\ref{fore}) goes exponentially high by increasing each one of the parameters $m$, $n$ or $N$.
Solutions with reduced complexity where proposed, for example, in \cite{ATasdighi2,Gholami2017}, but the corresponding design methods result in $g = 8$. Next, we prove that by using the SMC assumption we can instead design codes with girth up to $12$.

Let $m<n\leq N$ ($m,n,N\in \mathbb{N}$) and consider the exponent matrix $\mathbf{P}$ for a QC-LDPC code in the form (SMC assumption)
\begin{equation}
\mathbf{P}^{\mathrm{SMC}}_{m\times n}=\left[\begin{array}{c|c|c|c|c|c}
\vec{0} & \vec{P}_{1} & \gamma_{2} \otimes \vec{P}_{1} & \gamma_{3} \otimes \vec{P}_{1} & \ldots &  \gamma_{n-1} \otimes \vec{P}_{1}
\end{array}\right],
\label{eq:SMCexpomatrix}
\end{equation}
where $\vec{0}$ and $\vec{P}_{1}$ are column vectors of size $m$. $\vec{0}$ is an all zero vector and $\vec{P}_{1}$ is a vector with first (i.e., top most) entry equal to zero, second entry equal to one, while its remaining entries are selected from $\{2,\ldots ,N - 1\}$, in an increasing order. Vectors $\gamma_{j} \otimes \vec{P}_{1}$ ($j = 2,\ldots, n - 1;\; \gamma_{j}\in \{2,\ldots ,N - 1\}$ and $\gamma_{j}<\gamma_{j + 1}$) are obtained from the base vector $\vec{P}_{1}$ through sequential multiplications, and $\otimes$ represents multiplication mod $N$. Let us denote by $\mathcal{I}_{0,1,2,\ldots,j}^{k}$, $j=2,\ldots,n-1$ a set containing all relations (\ref{fore}) corresponding to the potential cycles having lengths ranging between $4$ and $2k$ ($k = 2, 3, 4, 5$) in the Tanner graph of \eqref{eq:SMCexpomatrix}. %the submatrix  $\left[\begin{array}{@{}c@{}|@{}c@{}|@{}c@{}|@{}c@{}|@{}c@{}} \vec{0} & \vec{P}_{1} & \gamma_2 \otimes \vec{P}_{1} & \ldots & \gamma_{n-1} \otimes \vec{P}_{1} \end{array}\right]$. 
The following proposition holds.
\begin{Proposition}
\label{SMCproposition}
Let
$\mathbf{P}^{\mathrm{SMC}}_{m\times n}$ be the exponent matrix of a QC-LDPC block code $C$ as defined in (\ref{eq:SMCexpomatrix}). Suppose that the Tanner graph associated to the submatrix $\left[\begin{array}{@{}c@{}|@{}c@{}} \vec{0} & \vec{P}_{1} \end{array}\right]$ contains no strictly avoidable cycles of length up to $10$. Then, the Tanner graph of $C$ has no strictly avoidable cycle of length up to $10$ for sufficiently large $N$ and a proper choice of $\gamma_j$'s.
\end{Proposition}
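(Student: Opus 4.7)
The plan is to translate the existence of a strictly avoidable cycle of length at most $10$ in the Tanner graph of $C$ into an integer linear equation in the free parameters $\gamma_2,\ldots,\gamma_{n-1}$, and to argue that a generic choice of these parameters avoids every such equation, except in a degenerate case which is ruled out precisely by the hypothesis on the base submatrix $\bigl[\vec{0}\mid\vec{P}_1\bigr]$.

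First, I would take $N$ sufficiently large and restrict the entries of $\vec{P}_1$ and the multipliers $\gamma_j$ to a range such as $\{1,\ldots,\lfloor\sqrt{N/2}\rfloor\}$, ensuring that every product $\gamma_{n_i}\,p_{m_i,1}$ stays strictly below $N$ so that the $\bmod\,N$ reduction in the SMC definition $p_{i,j}=\gamma_j\,p_{i,1}\bmod N$ never triggers. In this no-carry regime, the strictly avoidable cycle condition $\sum_{i=0}^{k-1}\bigl(p_{m_i n_i}-p_{m_i n_{i+1}}\bigr)=0$ reduces exactly to
\[
S(\gamma)\;:=\;\sum_{i=0}^{k-1} p_{m_i,1}\bigl(\gamma_{n_i}-\gamma_{n_{i+1}}\bigr)\;=\;0,
\]
with the convention $\gamma_0=0$, $\gamma_1=1$. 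Strictly avoidable cycles of length at most $10$ in $C$ are then parameterized by lattice solutions of this family of equations, indexed by the finitely many cycle topologies with $k\in\{2,3,4,5\}$.

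Second, for each fixed topology, i.e., a sequence $(m_i,n_i)_{i=0}^{k-1}$ with $m_i\ne m_{i+1}$ and $n_i\ne n_{i+1}$, $S$ is an affine-linear form in the free variables $\gamma_2,\ldots,\gamma_{n-1}$ whose coefficient of $\gamma_j$ equals $c_j=\sum_{i:\,n_i=j}(p_{m_i,1}-p_{m_{i-1},1})$. When $S\not\equiv 0$, the equation $S(\gamma)=0$ carves out an affine hyperplane in the $\gamma$-lattice, which contains only an $O(1/N)$ fraction of the available lattice points. Since the number of distinct cycle topologies of length at most $10$ is bounded by a fixed polynomial in $m$ and $n$, independent of $N$, a union bound shows that for sufficiently large $N$ a proper choice of the $\gamma_j$'s can simultaneously miss all such hyperplanes.

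The principal obstacle is the degenerate case in which $S(\gamma)\equiv 0$ identically, i.e., $c_j=0$ for every $j\ge 1$; the corresponding cycle persists regardless of the multipliers and cannot be eliminated by parameter tuning. To close the argument I would show that $c_j\equiv 0$ forces a strictly avoidable cycle of length at most $10$ in the base submatrix $\bigl[\vec{0}\mid\vec{P}_1\bigr]$, contradicting the hypothesis: substituting $\gamma_j\mapsto 1$ for every $j\ge 2$ collapses all non-zero columns of $\mathbf{P}^{\mathrm{SMC}}_{m\times n}$ onto $\vec{P}_1$ and turns the original cycle into a closed walk in $\bigl[\vec{0}\mid\vec{P}_1\bigr]$ with zero signed exponent sum, and a chord-cutting decomposition at repeated check or variable nodes then extracts a genuine strictly avoidable cycle of length $\le 2k\le 10$ in the base graph. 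Executing this combinatorial decomposition while preserving the no-backtracking constraints $m_i\ne m_{i+1}$ and $n_i\ne n_{i+1}$, and verifying that some component of the decomposition retains strictly zero signed sum, is the delicate step of the proof.
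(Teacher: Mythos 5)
Your overall framework is sound and genuinely different from the paper's: you reduce every potential cycle of length $2k\le 10$ to an affine form $S(\gamma)=\sum_j c_j\gamma_j$ (with $\gamma_0=0$, $\gamma_1=1$ and $c_j=\sum_{i:\,n_i=j}(p_{m_i,1}-p_{m_{i-1},1})$) and then argue by a hyperplane/union-bound count over the finitely many topologies, whereas the paper chooses $\gamma_2,\gamma_3,\ldots$ sequentially, each larger than the ratio of the remaining terms of the form, and finally takes $N$ beyond the largest attainable $|S|$; both schemes hinge on the same crucial fact. The gap is exactly at the step you flag as delicate: ruling out $S\equiv 0$. The route you sketch (set $\gamma_j=1$ for all $j\ge 2$, view the cycle as a closed walk in $\left[\vec{0}\,|\,\vec{P}_{1}\right]$, then chord-cut) does not work. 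First, the substitution destroys precisely the information you need: every step of the cycle joining two non-zero columns contributes $p_{m_i,1}-p_{m_i,1}=0$ after collapsing, so for the typical degeneracy candidates (e.g.\ a cycle alternating between two non-zero columns $j$ and $u$, where $S=(\gamma_j-\gamma_u)\sum_i(-1)^i p_{m_i,1}$) the collapsed walk is a pure back-and-forth on a single block column and no base-graph cycle can be extracted from it at all. Second, even when the collapsed walk is nontrivial, decomposing a zero-sum closed walk into cycles only guarantees that the cycle sums add to zero, not that any individual cycle has zero sum, so ``some component retains strictly zero signed sum'' is unfounded.

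What does close the argument (and is implicitly what the paper invokes when it asserts that all coefficients $a_i$ are nonzero) is a coefficient-wise analysis. For $k\le 5$ a column can be visited at most twice along the cycle. If a column $j\ge 1$ is visited once, then $c_j=p_{m_i,1}-p_{m_{i-1},1}\ne 0$ because $m_i\ne m_{i-1}$ and the entries of $\vec{P}_1$ are distinct. If it is visited twice, write $c_j=(p_{b,1}-p_{a,1})+(p_{d,1}-p_{c,1})$ with $a\ne b$, $c\ne d$; since the two visits are at cyclic distance at least two, at least one of the equalities $b=c$, $d=a$ is forbidden by the consecutive-row constraint of the original cycle, and then $c_j=0$ forces all of $a\ne b$, $b\ne c$, $c\ne d$, $d\ne a$ together with $p_{b,1}+p_{d,1}=p_{a,1}+p_{c,1}$, which is exactly a strictly avoidable $8$-cycle in $\left[\vec{0}\,|\,\vec{P}_{1}\right]$, contradicting the hypothesis (and if the cycle uses only columns $0$ and $1$, $S\equiv 0$ is itself a strictly avoidable cycle of length $2k\le 10$ in the base submatrix). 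With this lemma in place your counting argument goes through; without it, the proposal is incomplete at its central step.
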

\begin{proof}
Demonstration is conducted inductively, which means that $\gamma_2$ is determined first, followed sequentially by $\gamma_3, \gamma_4, \ldots, \gamma_{n - 1}$.
Since we are considering cycles with length up to $10$, for each element in $\mathcal{I}_{0,1,\ldots,s - 1}^{k}$ we can write a relation of type \eqref{fore} consisting of, at most, five parts (depending on the overall length of a considered cycle), namely
\begin{equation}
a_i \gamma_i + a_j \gamma_j + a_h \gamma_h + a_k \gamma_k + a_{s - 1} \gamma_{s - 1}
\label{eq:cond}
\end{equation}
where each two successive indexes are distinct, i.e. $i\neq j\neq h \neq k \neq s-1$. According to \eqref{eq:SMCexpomatrix}, the coefficients $a_l$'s include only the elements $p_{i0} = 0$ and $p_{i1}$, $i = 0, \ldots, m - 1$. Hence, $a_0= 0$, while, if present in \eqref{eq:cond},  $\gamma_1 = 1$. 
Having assumed that the Tanner graph relative to the submatrix $\left[\begin{array}{@{}c@{}|@{}c@{}} \vec{0} & \vec{P}_{1} \end{array}\right]$ has no strictly avoidable cycles with length up to $10$, it has to be $a_i\neq 0$, $\forall i \in [1,\ldots, s-1]$. In order to ensure that the whole expression is different from $0$ as well, it is sufficient to choose
\begin{equation}
\gamma_{s - 1} > \left|\frac{- a_i \gamma_i - a_j \gamma_j - a_h \gamma_h - a_k \gamma_k}{a_{s - 1}}\right|
\end{equation}
with values $\in \{\gamma_{s - 2} + 1, \ldots, N - 1 \}$.
This condition must hold for any element of $\mathcal{I}_{0,1,\ldots,s - 1}^{k}$. So, setting $\lambda_{0,1, \ldots, s - 1}^{k}=\max\{\vert x\vert\;|\;x\in \mathcal{I}_{0,1, \ldots, s - 1}^{k}\}$, and $N > \lambda_{0,1, \ldots, s - 1}^{k}$, all the elements in $\mathcal{I}_{0,1,\ldots,s - 1}^{k}$ are non-zero mod $N$. The final value of $N$ results at the end of this analysis, that is for $s = n$.
\end{proof}
\begin{Example} Let $m=3$ and $n=6$. Suppose that $\mathbf{P}^{\mathrm{SMC}}_{3\times 6}$ is the exponent matrix of a QC-LDPC block code $C$, as defined in (\ref{eq:SMCexpomatrix}), such that $\vec{P}_{1}=\left( 0,1,29 \right)^T$. Considering (\ref{fore}), it is easy to check that the Tanner graph associated to $\left[\begin{array}{@{}c@{}|@{}c@{}} \vec{0} & \vec{P}_{1} \end{array}\right]$ contains no strictly avoidable cycles of length up to $10$. Then, according to Proposition \ref{SMCproposition}, the Tanner graph of $C$ has no strictly avoidable cycle of length up to $10$ for sufficiently large $N$ and a proper choice of $\gamma_j$'s. Choosing $\gamma_{2}=3$, $\gamma_{3}=7$, $\gamma_{4}=67$ and $\gamma_{5}=144$ and $N=271$, it is easily verified that $C$ has $g=12$. The code length is $L = 1626$.
\end{Example}

\section{Greedy search algorithm}
\label{Sec4}
Based on Proposition \ref{SMCproposition}, we have developed a search algorithm that finds the smallest possible $\gamma_j\in \{\gamma_{j - 1} + 1, \ldots, N - 1\}$, $j = 2,\ldots,n - 1$, that leads to $g=12$. From the complexity viewpoint, we can estimate the advantage resulting from the SMC assumption by considering that the exhaustive search of $\mathbf{P}$ requires to find $mn$ elements. Instead, with our method, we only need to find $m + n - 4$ values, namely, $m - 2$ entries of the vector $\vec{P}_{1}$ and $n - 2$ multiplication factors.
A formal description of the proposed greedy search procedure is given in Algorithm \ref{Algo1}. As inputs, it takes $m,\;n,\;N$ $(m<n\leq N)$, with $m,n,N\in \mathbb{N}$, $k$ ($= 2, 3, 4, 5$), and an all zero matrix $\mathbf{P}$ of dimension $m\times n$. As output, it returns $0$ if there is no feasible solution, or an exponent matrix with girth $g \geq 2k$, otherwise. Moreover, the minimum possible $m_{h}$ for each exponent matrix of SC-LDPC-CCs has been found through the min-max optimization model, recently proposed in \cite{MBAT2017}.

\begin{algorithm} 
\caption{Greedy search algorithm}\label{Algo1} 
\small \hspace*{\algorithmicindent} \textbf{Input:}\;$m,\;n,\;N,\;k$ and zero matrix $\mathbf{P}=\left[\begin{array}{@{}c@{}|@{}c@{}|@{}c@{}|@{}c@{}|@{}c@{}}
\vec{0} & \vec{P}_{1} & \vec{P}_{2} & \ldots & \vec{P}_{n-1}
\end{array}\right]$\\
\hspace*{\algorithmicindent} \textbf{Output:}\; Exponent matrix $\mathbf{P}$ with girth $2k$
\begin{algorithmic}[1] 
%\Procedure{MyProcedure}{} 
\State $p_{10} \gets 0,\;p_{11} \gets 1,\;\mathcal{S}_2 \gets (m-2)\text{-combinations of}\;\{2,\ldots,N-1\}$
\BState \emph{top}: 
\State Pick $(p_{21},\ldots,p_{(m-1)1})$ from $\mathcal{S}_2$ in a way that $p_{21}<p_{31}<\ldots<p_{(m-1)1}$
\State $\mathcal{S}_2 \gets \mathcal{S}_2\setminus \{(p_{21},\ldots,p_{(m-1)1})\}$
%\BState \emph{top}: 
\If {at least one of the relations in $\mathcal{I}_{0,1}^{k-1}$ results in a cycle and $\vert \mathcal{S}_2\vert >0$}
\State \textbf{goto} \emph{top}
\ElsIf {$\vert \mathcal{S}_2\vert =0$}
\State \textbf{return $0$}
\EndIf 
\For {$j:\;2\;\text{to}\;n-1$}
\State $\mathcal{N}_{j} \gets \{\gamma_{j-1}+1,\ldots,N-1\}$
\BState \emph{loop}: 
\State Pick $\gamma_{j}$ from the set $\mathcal{N}_{j}$
\State $\mathcal{N}_{j}\gets \mathcal{N}_{j}\setminus \{\gamma_{j}\}$
\State $\vec{P}_{j} \gets \gamma_{j} \otimes \vec{P}_{1}$
\If {at least one of the relations in $\mathcal{I}_{0,1,\ldots,j}^{k-1}$ results in a cycle and $\vert \mathcal{N}_{j}\vert >0$} 
\State \textbf{goto} \emph{loop}
\ElsIf {$\vert \mathcal{N}_{j}\vert =0$} 
\State \textbf{return $0$}
%\State \textbf{close}; 
\EndIf
\EndFor
\State \textbf{return $\mathbf{P}$}
%\EndProcedure 
\end{algorithmic}
\end{algorithm}

\section{Numerical Results}
\label{Sec5}

By applying the method presented in the previous sections, we have designed several codes with values of $N$  and $m_h$ in many cases significantly smaller than those of classical codes with the same code rate and girth. The method that in Section \ref{Sec3} has been illustrated for the case of $g = 12$ has been applied also for the case of $g = 10$.
In particular, we have considered $m = 3, 4$ and $n = 4, \ldots, 12$ for the QC-LDPC block codes, and $c = 3, 4$ and $a = 4, \ldots, 12$ for the SC-LDPC-CCs. The obtained values of $N$ and $m_h$ have been compared with those resulting from the application of classical design approaches reported in \cite{sullivan1,MBAT2017,Bocharova1,ATasdighi1,Amirzadeh1}, which,
to the best of our knowledge, are those producing the codes with the minimum values of $N$ and $m_h$.
The comparison with our results is shown in Figs. \ref{fig:LiftDegVSn} and \ref{fig:SynForMeOrdVSa}.
We see that the values obtained through our approach are everywhere smaller (often significantly) than those derived with the previous solutions. Let us denote as $\tilde{N}$ ($\tilde{m}_h$) the smallest lifting degree (syndrome former memory order) found with our approach, and as $N^*$ ($m_h^*$) the minimum value found through previous approaches. The ratio of the decoding latency of the newly designed QC-LDPC block codes over that of the classical ones is
\begin{equation}
\Theta_N=\frac{\tilde{N}}{N^*}
\label{eq:ratioQC}.
\end{equation}
Based on \eqref{cas}, we can also assess the ratio of decoding complexity (per output bit) and latency achieved by the newly designed SC-LDPC-CCs over the classical ones, as
\begin{equation}
\Theta_{m_h}=\frac{\tilde{m}_h+1}{m_h^*+1}.
\label{eq:ratios}
\end{equation} 
The values of $\Theta_N$ and $\Theta_{m_h}$ should be kept as small as possible if we aim at minimizing the decoding latency and complexity. We have obtained values of $\Theta_N$ as small as $0.47$, which means a reduction in decoding latency by more than $50\%$, and values of $\Theta_{m_h}$ as small as $0.23$, yielding a reduction of $\Lambda_{\mathrm{SW}}$ and $\Gamma_{\mathrm{SW}}$ by more than $75\%$, with respect to previous solutions.

As a further benchmark of the newly designed codes, we have estimated the bit error rate (BER) of our SC-LDPC-CCs through Monte Carlo simulations of binary phase shift keying modulated transmissions over the additive white Gaussian noise channel, and compared it with that of some codes constructed following \cite{MBAT2017,Bocharova1}. A full-size BP decoder and a BP-based SW decoder, both performing $100$ iterations, have been used in the simulations. Decoding is performed on a full codeword of length $L\rightarrow \infty$ (practically, $L\approx 6\times 10^4$) when the full BP decoder is considered, whereas the SW decoder works over sliding windows of $Wa$ bits each. Let us consider two of our codes ($C_1$ and $C_2$) with $R=\frac{4}{7}$ and $g=10$ and $12$, two previous codes ($C_{B1}$ and $C_{B2}$) designed following \cite{MBAT2017} and two codes ($C_{B3}$ and $C_{B4}$) obtained by unwrapping QC-LDPC block codes designed as in \cite{Bocharova1}. The parameters of these codes are summarized in Table \ref{table:Tabparamsc}. Their BER performance is shown in Fig. \ref{fig:performance10}. We notice that the performance degradation is minimal for both very large and small window sizes. So, we can conclude that the new SC-LDPC-CCs do not exhibit any significant loss with respect to the classical codes, while they enjoy reduced latency and complexity. The value of $\Theta_{m_h}$, according to \eqref{eq:ratios}, is also shown in Table \ref{table:Tabparamsc}.

\begin{figure}
\centering
\includegraphics[width=78mm,keepaspectratio]{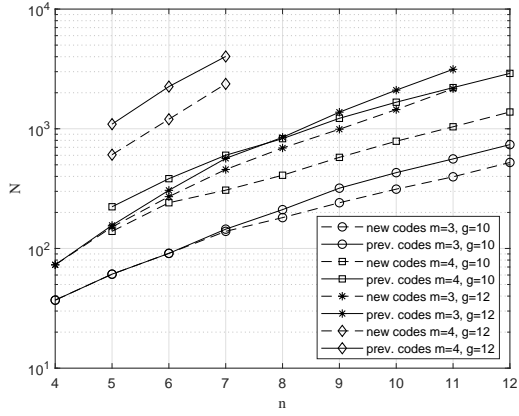}
\caption{Minimum lifting degree ($N$) of new and previously designed QC-LDPC codes versus $n$, for $m=3,4$.}
\label{fig:LiftDegVSn}
\end{figure}

\begin{figure}
\centering
\includegraphics[width=78mm,keepaspectratio]{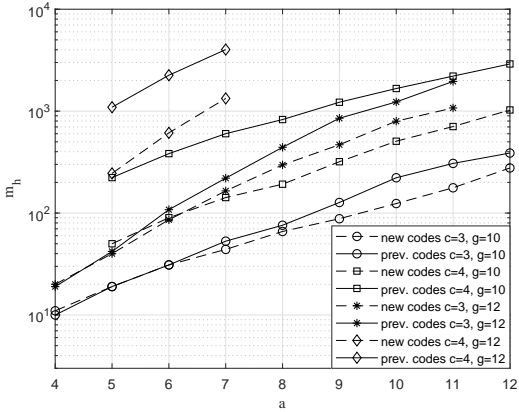}
\caption{Minimum syndrome former memory order ($m_h$) of new and previously designed SC-LDPC-CCs versus $a$, for $c=3,4$.}
\label{fig:SynForMeOrdVSa}
\end{figure}

\begin{table}[!t]
\renewcommand{\arraystretch}{1}
\caption{Values of $a$, $c$, $m_h$, $v_s$ and $g$ of the considered SC-LDPC-CCs with $R=\frac{4}{7}$.}
\label{table:Tabparamsc}
\centering
\begin{tabular}{|c|c|c|c|c|c|c|c|c|}
\hline
Code & $a$ & $c$ & $m_h$ & $v_s$ & $g$&$\Theta_{m_h}$\\ \hline\hline
$C_1$ & $7$ & $3$ & $44$ & $315$ & $10$ & $-$\\ \hline
$C_{2}$ & $7$ & $3$& $165$ & $1162$ & $12$&$-$\\ \hline
$C_{B1}$ & $7$ & $3$ & $53$ & $378$ & $10$&$0.83$\\ \hline
$C_{B2}$ & $7$ & $3$ & $220$ & $1547$ & $12$&$0.75$\\ \hline
$C_{B3}$ & $7$ & $3$ & $88$ & $623$ & $10$&$0.51$\\ \hline
$C_{B4}$ & $7$ & $3$ & $432$ & $3031$ & $12$&$0.38$\\ \hline
\end{tabular}
\end{table}

\begin{figure}[t!]
\centering
\subfloat[][]{\includegraphics[width=0.25\textwidth]{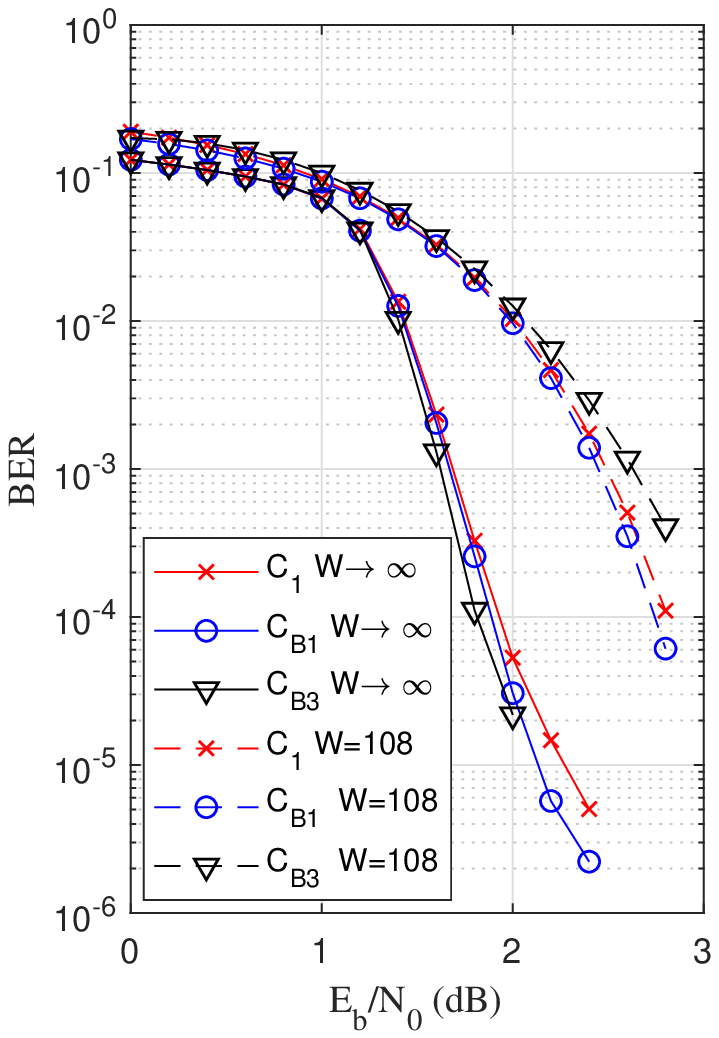}\label{fig:samelay}}
\subfloat[][]{\includegraphics[width=0.25\textwidth]{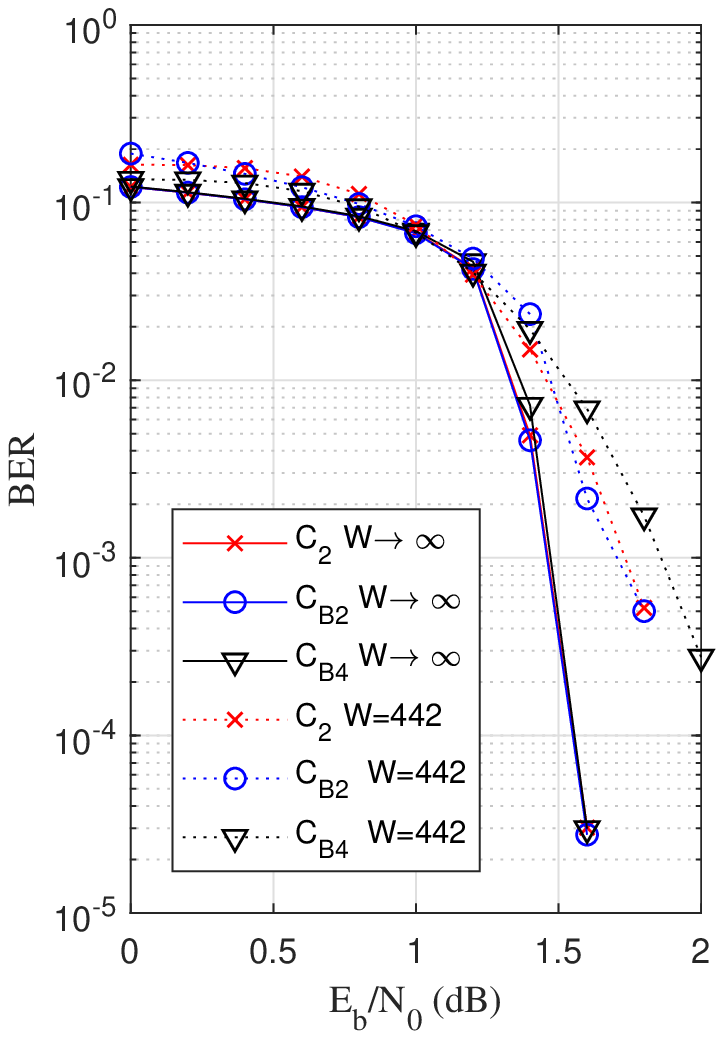}\label{fig:difflay}}
 \caption{Simulated performance of SC-LDPC-CCs with: (a) $g=10$ and (b) $g=12$ as a function of the signal-to-noise ratio.}\label{fig:performance10}
\end{figure}

\section{Conclusion}
\label{Sec6}

We have proposed a method for the design of QC-LDPC block codes able to achieve girths $g = 10, 12$ with very short block lengths. The same approach has been used to design SC-LDPC-CCs with very short syndrome former constraint length.
Our method achieves important reductions in the decoding latency and per output bit complexity at the cost of negligible performance losses over classical approaches.

%\vspace{-1.5cm}\begin{IEEEbiography}[{\includegraphics[width=1.35in,height=1.35in,clip,keepaspectratio]{AlirezaPic}}]{Alireza Tasdighi}
%received the B.A.Sc. degree from Shahid Bahonar University of Kerman, Kerman, Iran, in 2009, and the M.A.Sc. degree from Sharif University of Technology, Tehran, Iran, in 2011, both in applied mathematics. During 2011--2012 and as an instructor, he was with the Department of Basic Sciences, Persian Golf University of Bushehr, Bushehr, Iran. From 2012, he has been pursuing his Ph.D. at Amirkabir University of Technology, Tehran, Iran. As part of his Ph.D program, he visited the Department of System and Computer Engineering, Carleton University, from April 2015 to Jan. 2016. Alireza's interests mainly include Graph Theory, Linear Algebra, QC-LDPC Codes and Protograph Codes. 
%\end{IEEEbiography}
%\vspace{-1.5cm}\begin{IEEEbiography}[{\includegraphics[width=1.1in,height=1.35in,clip,keepaspectratio]{MohammadPic}}]{Mohammad-Reza Sadeghi}
%received his Ph.D. from Carleton University, Canada in 2003. He was given the Mareh award for outstanding graduate (Ph.D.) studies. In 2004 he was a Postdoctoral Fellow in Carleton University and received a scholarship from the Fields Institute of Mathematical Studies. He is currently an associate professor at Department of Mathematics and Computer Science, Amirkabir University of Technology, Tehran, Iran. His field of interests are coding theory, lattice codes and lattice based cryptography.
%\end{IEEEbiography}

\end{document}